\newcommand*\samethanks[1][\value{footnote}]{\footnotemark[#1]}
\newcommand{\C}{\mathbb{C}}
\renewcommand{\cref}{\Cref}
\newcommand{\OPT}{\mathsf{OPT}}
\newcommand{\kk}{k_{\scriptscriptstyle \OPT}}
\newcommand{\kkk}{k'}
\newcommand{\kkkk}{k_{\scriptscriptstyle \OPT}'}
\DeclarePairedDelimiter\floor{\lfloor}{\rfloor}
\DeclarePairedDelimiter\ceil{\lceil}{\rceil}
\newtheorem{theorem}{Theorem}
\newtheorem{lemma}{Lemma}
\begin{document}
\title{Graph Balancing with Two Edge Types }
\author{Deeparnab Chakrabarty\thanks{Microsoft Research, 9 Lavelle Road, Bangalore, India 560001. {\tt deeparnab, kirankumar.shiragur @gmail.com}} \and Kirankumar Shiragur\samethanks}
\date{}
\maketitle
\begin{abstract}
	In the graph balancing problem the goal is to orient a weighted undirected graph to minimize the maximum weighted in-degree. This special case of makespan minimization is NP-hard to approximate to a factor better than $3/2$ even when there are only two types of edge weights. In this note we describe a simple $3/2$ approximation for the graph balancing problem with two-edge types, settling this very special case of makespan minimization.
\end{abstract}
\section{Introduction}
In the graph balancing problem, we are given an undirected graph $G=(V,E)$ with weights $p_e$ on edge $e$. The graph could have parallel edges and self loops. The goal is to find an orientation of the edges so as to minimize the maximum weighted in-degree. 

This problem is a special case of makespan minimization, a classic problem in approximation algorithms, where
the input is a collection of jobs $J$ and machines $M$, and processing time $p_{ij}$ for $i\in M$ and $j\in J$.
The goal is to find an allocation of all jobs to machines so as to minimize $\max_{i\in M} \sum_{j\in S_i} p_{ij}$. Note that graph balancing is a special case 
when nodes correspond to machines and each edge corresponds to a job which has processing time $p_e$ on each of the endpoints and $\infty$ on all other machines. In fact, this is a special case of the so-called restricted assignment machine scheduling problem where each job has an inherent processing time but can only be assigned to a subset of the machines.

In 1990, Lenstra, Shmoys, and Tardos~\cite{LST} described a now classic $2$-approximation for this problem, and also proved it is NP hard to obtain an approximation factor better than $3/2$. Closing this gap has challenged many researchers in the past three decades. The graph balancing problem was introduced in 2008 by  Ebenlendr, Krcal and Sgall~\cite{EKS} as an interesting special case of makespan minimization, and they gave an $1.75$-approximation algorithm for this problem. \cite{EKS} also showed that even in the graph balancing problem, it is NP-hard to obtain an approximation factor better than $3/2$. 
%
It is perhaps testament to the difficulty of the makespan minimziation problem that we do not know the ``true answer'' even for graph balancing. 
In 2011, Kolliopoulos and Moysoglou~\cite{KM} looked at an even more special case -- where there are only two types of edge weights! It turns out that even in this special, special case, one can't get better than $3/2$-approximation; the reduction of Ebenlendr et al~\cite{EKS} uses only two distinct edge weights. Kolliopoulos and Moysoglou~\cite{KM} give a $1.652$-factor approximation for this case. 

The purpose of this note is to show that a slight modification of techniques of ~\cite{KM}  in fact gives an optimal $1.5$-approximation for graph balancing with two types of edge weights.

\section{Algorithm}
By scaling, we may assume that the edge weights are either $1$ or $c < 1$ for some real number $c$. 
In fact, Kolliopoulos and Moysoglu~\cite{KM} showed if $c = 1/k$ for some integer $k$, then there ia a $1.5$-approximation algorithm. 
Our main observation is with a bit of case analysis, we can generalize to arbitrary $c$.

Fix an instance of the graph balancing problem and let $\OPT$ be its optimum value. The following lemma implies we may assume $\OPT < 2$.
%
%
\begin{lemma}\label{lem:1}
	There is a $\frac{3}{2}$-approximation algorithm if $\OPT \geq 2$. 
\end{lemma}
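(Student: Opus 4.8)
The plan is to derive this as a near-immediate consequence of the classical additive guarantee of Lenstra--Shmoys--Tardos rounding, specialized to the fact that every edge weight is at most $1$. Concretely, I would first set up the natural assignment LP: introduce a variable $x_{e,v} \ge 0$ for each incidence of an edge $e$ with an endpoint $v$, together with a makespan variable $T$. The constraints are $x_{e,u} + x_{e,v} = 1$ for every edge $e = (u,v)$ (each edge must be oriented to one endpoint), and $\sum_{e \ni v} p_e\, x_{e,v} \le T$ for every vertex $v$ (the weighted in-degree is at most $T$). Minimizing $T$ yields an LP whose optimum $T^{\ast}$ satisfies $T^{\ast} \le \OPT$, since any optimal integral orientation is feasible with $T = \OPT$.

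Next I would take an optimal extreme point $(x, T^{\ast})$ of this LP and examine the subgraph $H$ of $G$ consisting of the \emph{fractional} edges, i.e.\ those $e$ for which $0 < x_{e,v} < 1$ at its endpoints. A standard extreme-point counting argument (the number of nonzero variables is at most the number of linearly independent tight constraints) shows that each connected component of $H$ has at most as many edges as vertices; that is, $H$ is a pseudoforest.

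I would then round as follows. Every edge already oriented integrally by $x$ keeps its orientation, and the fractional edges, forming a pseudoforest, can be oriented so that each vertex receives at most one of them: orient each tree away from a chosen root and each unicyclic component consistently around its cycle, so that every vertex has in-degree at most one within $H$. For each vertex $v$, the integral part of the orientation contributes at most $T^{\ast}$ to its weighted in-degree, since that contribution is dominated by the left-hand side of $v$'s LP degree constraint, while the single extra fractional edge contributes at most $p_{\max} = 1$. Hence the rounded orientation has makespan at most $T^{\ast} + 1 \le \OPT + 1$.

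Finally I would invoke the hypothesis $\OPT \ge 2$: in this regime $\OPT + 1 \le \OPT + \tfrac{\OPT}{2} = \tfrac{3}{2}\OPT$, so the produced orientation is a $\tfrac{3}{2}$-approximation. I expect the only step with any content to be the pseudoforest structure of the extreme point together with the accompanying additive $+p_{\max}$ rounding bound; the rest is immediate. Notably, the two-edge-type assumption plays no role in this lemma---only $p_{\max} \le 1$ is used---so I would phrase it in that generality.
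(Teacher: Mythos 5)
Your proof is correct and rests on exactly the same idea as the paper's: the paper proves this lemma in one line by citing the Lenstra--Shmoys--Tardos algorithm, whose additive guarantee of $+p_{\max}\le +1$ yields makespan $\le \OPT+1\le \frac{3}{2}\OPT$ once $\OPT\ge 2$. You have merely unfolded the standard extreme-point/pseudoforest proof of that cited black box (correctly, including the observation that only $p_{\max}\le 1$ matters), so the two arguments coincide in substance.
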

\begin{proof}
	The algorithm in~\cite{LST} returns an orientation with weighted in-degree $\le \OPT+1\le \frac{3}{2}\OPT$ if $\OPT\geq 2$. 
\end{proof}
Therefore, we may assume $\OPT = 1 + \kk c$ or $OPT=\kkkk c$ for some non-negative integers $\kk < 1/c$ and $\kkkk < 2/c$. 
%
\noindent
Let $k:=\floor{\frac{1}{c}}$. Given some integers $p,q$, we describe a flow networks $N(p,q)$ very similar to that defined by~\cite{KM} differing only in the capacities. For each arc in the network we define a flow lower bound and upper bound; a flow is feasible if the flow through all arcs lies within their range. The lower bound unless explicitly mentioned is $0$, and the upper bound unless explicitly mentioned is $\infty$.
Call an edge $e$ small if $p_e = c$, and big otherwise. \\

\noindent
{\bf Network Description $N(p,q)$ for integers $p,q$.} We have a single source $s$. For each edge $e$ in the graph balancing instance, we have a node $n_e$ in the network. There is an arc from $s$ to each such $n_e$ 
with flow {\em lower bound} of $1$ if $e$ is a small edge, and a flow lower bound of $p$ otherwise. We have two nodes, $v_{i,b}$ and $m_i$, for each vertex $i$ in the graph balancing instance.
There is an arc of flow upper bound $p$ from $v_{i,b}$ to $m_i$.
For every big edge $e$, there is an arc from $n_e$ to $v_{i,b}$ iff $e$ is adjacent to $i$ with flow upper bound $p$. 
For every small edge $e$, there is an arc from $n_e$ to $m_{i}$ iff $e$ is adjacent to $i$ with flow upper bound $1$. 
Therefore, every $n_e$ has out-degree $2$ unless it is a loop in which case it has out-degree $1$. 
Finally, we have one sink node $t$ and each node $m_i$ has an arc to $t$ with flow upper bound  $q$. 
%
 An illustration of the flow network is shown in Figure~\ref{fig:gb}.
\begin{figure}[h!]
	\begin{center}
		\includegraphics[scale=0.3]{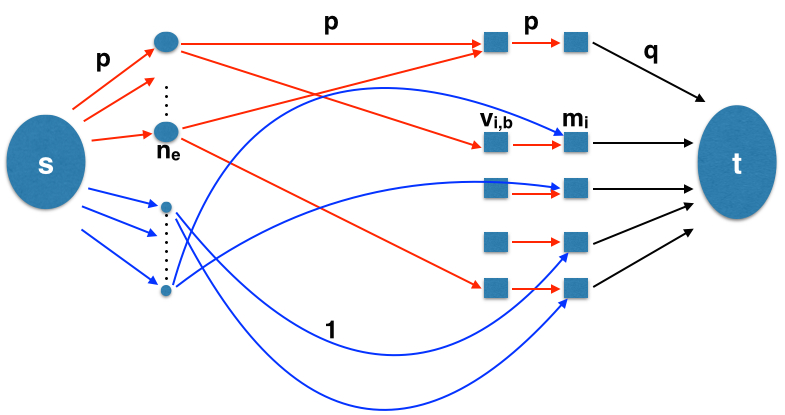}
	\end{center}
	\caption{\small Big circles denote big edges and small ones denote small edges. All the red arcs have weight $p$, blue arcs have weight $1$, and black arcs have weight $q$}
	\label{fig:gb}
\end{figure}

\begin{lemma}\label{lem:21}
If $\OPT = 1+\kk c < 2$ for some integer $\kk$, then there is a feasible flow in the network $N(k,\kk+k)$. 
\end{lemma}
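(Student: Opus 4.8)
The plan is to build the required feasible flow directly from an optimal orientation. Fix an orientation $\sigma$ that achieves weighted in-degree at most $\OPT = 1 + \kk c$ at every vertex. I would push flow as follows: saturate every source arc $s \to n_e$ at its lower bound, so $1$ unit leaves $s$ for each small edge and $k$ units for each big edge, and then route all of this flow out of $n_e$ toward the unique endpoint to which $\sigma$ orients $e$. Concretely, a small edge oriented to $i$ sends $1$ unit along $n_e \to m_i$, while a big edge oriented to $i$ sends $k$ units along $n_e \to v_{i,b}$ followed by $k$ units along $v_{i,b} \to m_i$. Finally, each $m_i$ forwards all of its accumulated flow to $t$.

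The key structural observation, which makes the arc $v_{i,b} \to m_i$ feasible, is that since $\OPT < 2$ and each big edge has weight $1$, no vertex can receive two big edges in $\sigma$ (that would force weighted in-degree $\ge 2 > \OPT$). Hence at most one big edge is oriented to each $i$, so $v_{i,b}$ carries either $0$ or exactly $k$ units, both within its upper bound $p = k$. Granting this, flow conservation holds at every $n_e$, $v_{i,b}$, and $m_i$ by construction, and all arc capacities other than those into $t$ are satisfied immediately, since each is saturated exactly to its bound.

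The heart of the argument is verifying the capacity $q = \kk + k$ on each arc $m_i \to t$. Let $b_i \in \{0,1\}$ and $s_i$ denote the numbers of big and small edges oriented to $i$; the flow reaching $t$ through $m_i$ equals $b_i k + s_i$, and the orientation guarantees $b_i + s_i c \le 1 + \kk c$. I would split into two cases. If $b_i = 1$, then $s_i c \le \kk c$ forces $s_i \le \kk$, so the flow is $s_i + k \le \kk + k$. If $b_i = 0$, then $s_i c \le 1 + \kk c$ gives $s_i \le \kk + 1/c$, and integrality of $s_i$ together with integrality of $\kk$ yields $s_i \le \kk + \lfloor 1/c \rfloor = \kk + k$. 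Either way the flow into $t$ is at most $q$, establishing feasibility. The step I expect to demand the most care is precisely this last rounding: the definition $k = \lfloor 1/c \rfloor$ is exactly what is needed to convert the fractional bound $s_i \le \kk + 1/c$ into the integer capacity $\kk + k$, and making the floors line up is the crux that ties the network's parameters to the optimum value.
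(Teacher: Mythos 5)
Your proposal is correct and follows essentially the same route as the paper's own proof: route flow along the optimal orientation ($1$ unit per small edge, $k$ units per big edge through $v_{i,b}$), use $\OPT<2$ to cap each vertex at one big edge, and verify the sink capacity $\kk+k$ by the same two-case analysis, with the rounding $\floor{(1+\kk c)/c}=\kk+\floor{1/c}=\kk+k$ matching the paper's computation exactly.
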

\begin{proof}
	For small edges $e$, we send $1$ unit of flow from $n_e$ to $m_i$ where $i$ is the vertex towards which $e$ is oriented. 
	For every big edge $e$, we send $k$ units of flow from $n_e$ to $v_{i,b}$ to $m_i$, where $i$ is the vertex towards which $e$ is oriented. 
	Finally $m_i$ sends all the flow it receives to $t$.

	Since $\OPT = 1+\kk c < 2$, in the optimal allocation each vertex has at most one big edge oriented towards it, and so no $v_{i,b}$ receives flow from two separate $n_e$'s. So the flow on $(v_{i,b},m_i)$ edge is at most $k$. If a machine $m_i$ receives $k$ units of flow from $v_{i,b}$, then it must be that in the graph balancing solution $i$ has one big edge oriented towards it and since $\OPT = 1 + \kk c$ is has at most $\kk$ small edges are oriented towards $i$. Thus, the total flow that $m_i$ receives is at most $(k+\kk)$. If machine $m_i$ receives $0$ units of flow from $v_{i,b}$, then in the graph balancing solution there are at most $\floor{\frac{1+\kk c}{c}} = \kk + \floor{1/c} = \kk+k$ small edges oriented towards $i$, implying the total flow from $m_i$ to $t$ is at most $(k+\kk)$.
\end{proof}

\begin{lemma}\label{lem:22}
If $\OPT = \kkkk c < 2$ for some integer $\kkkk$ and not equal to $1+\kk c$ for any non-negative integer $\kk$, there is a feasible flow in the network $N(k+1,\kkkk)$. 
\end{lemma}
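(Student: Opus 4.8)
The plan is to mimic the flow construction in the proof of \cref{lem:21}, but to route $k+1$ units (rather than $k$) along each big edge, and then verify that every arc capacity in $N(k+1,\kkkk)$ is respected. Starting from an optimal orientation with $\OPT = \kkkk c$, for each small edge $e$ oriented towards $i$ I send $1$ unit of flow along $n_e \to m_i$, and for each big edge $e$ oriented towards $i$ I send $k+1$ units along $n_e \to v_{i,b} \to m_i$; each $m_i$ then forwards everything it receives to $t$. By construction the source lower bounds ($1$ on small-edge arcs, $k+1$ on big-edge arcs) are met with equality, and the upper bounds of $1$ on small-edge arcs and $k+1$ on the big-edge and $v_{i,b}\to m_i$ arcs are immediate once we observe that $\OPT < 2$ forces at most one big edge to be oriented towards any vertex (two would already give weighted in-degree $\ge 2$); hence no $v_{i,b}$ receives flow from two distinct $n_e$'s.

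The only remaining constraint is the capacity $\kkkk$ on each arc $m_i \to t$, i.e.\ that the total flow entering $m_i$ is at most $\kkkk$. Write $s_i$ for the number of small edges oriented towards $i$. If no big edge points at $i$, then $m_i$ receives $s_i$ units, and the weighted in-degree bound $s_i c \le \OPT = \kkkk c$ gives $s_i \le \kkkk$, settling this case. The crux is the case of exactly one big edge pointing at $i$: then $m_i$ receives $(k+1) + s_i$ units, so I must show $s_i \le \kkkk - k - 1$. The weighted in-degree bound $1 + s_i c \le \kkkk c$ only yields $s_i \le \kkkk - 1/c$, and since $1/c \ge k$ this gives merely $s_i \le \kkkk - k$, which is one more than I can afford.

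This off-by-one is precisely where the hypothesis $\OPT \ne 1 + \kk c$ must enter, and I expect it to be the main obstacle. I would split on whether $1/c$ is an integer. If $1/c \notin \Z$, then $k < 1/c$, so $\floor{\kkkk - 1/c} = \kkkk - \ceil{1/c} = \kkkk - k - 1$, and integrality of $s_i$ already forces $s_i \le \kkkk - k - 1$. If instead $1/c = k$, then $s_i \le \kkkk - k$, and I rule out equality: were $s_i = \kkkk - k$, the weighted in-degree at $i$ would be $1 + (\kkkk - k)c = \kkkk c = \OPT$ (using $kc = 1$), exhibiting $\OPT = 1 + \kk c$ with the non-negative integer $\kk = \kkkk - k$, contradicting the hypothesis. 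In both subcases $s_i \le \kkkk - k - 1$, so the flow into every $m_i$ is at most $\kkkk$, and the constructed flow is feasible in $N(k+1,\kkkk)$.
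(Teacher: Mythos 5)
Your proof is correct and takes essentially the same route as the paper: the identical flow construction from an optimal orientation, with the crux being the computation $\floor{\kkkk - 1/c} = \kkkk - \ceil{1/c} = \kkkk - (k+1)$ when $1/c \notin \Z$. The only (cosmetic) difference is that the paper dispenses with the case $1/c \in \Z$ up front via the identity $\kkkk c = 1 + c\,(\kkkk - 1/c)$, whereas you keep it as a subcase and apply the same identity per vertex to rule out $s_i = \kkkk - k$ --- equally valid, and in fact slightly more careful since your version automatically guarantees $\kk = s_i \geq 0$.
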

\begin{proof}
	Since $\kkkk c \neq 1 + \kk c$ for any integer $\kk$, we get $1/c$ is not an integer for otherwise $\kkkk c = 1 + c\cdot\left( \kkkk - 1/c\right)$. Therefore, $\ceil{1/c} = k+1$.
	We will use this in the proof.
	
	For small edges $e$, we send $1$ unit of flow from $n_e$ to $m_i$ where $i$ is the vertex towards which $e$ is oriented. 
	For every big edge $e$, we send $k+1$ units of flow from $n_e$ to $v_{i,b}$ to $m_i$, where $i$ is the vertex towards which $e$ is oriented. 
	Finally $m_i$ sends all the flow it receives to $t$.
	
	Since $\OPT = \kkkk c < 2$, in the optimal allocation each vertex has at most one big edge oriented towards it, and so no $v_{i,b}$ receives flow from two separate $n_e$'s. So the flow on $(v_{i,b},m_i)$ edge is at most $k+1$. If a machine $m_i$ receives $k+1$ units of flow from $v_{i,b}$, then it must be that in the graph balancing solution $i$ has one big edge oriented towards it.
	Therefore the number of small edges oriented towards $i$ is at most  $\floor{\frac{\kkkk c - 1}{c}} = \floor{\kkkk - 1/c} = \kkkk - \ceil{1/c} = \kkkk - (k+1)$.
		 Thus, the total flow that $m_i$ receives is at most $\kkkk$. If machine $m_i$ receives $0$ units of flow from $v_{i,b}$, then in the graph balancing solution there are at most $\kkkk$ small edges oriented towards $i$, implying the total flow from $m_i$ to $t$ is at most $\kkkk$.
\end{proof}
\noindent
Since all capacities in $N(p,q)$ are integral, by integrality of flows,  if there is a feasible fractional flow in $N(p,q)$, then there must be a feasible integral flow in $N(p,q)$.

\begin{lemma}\label{lem:3}
	Given an integral feasible flow in $N(p,q)$, we can obtain a schedule with makespan $\leq \max \{c q, 1 + c\cdot (q - \floor{\frac{p+1}{2}})\}$.
\end{lemma}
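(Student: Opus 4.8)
The plan is to convert an integral feasible flow in $N(p,q)$ into an orientation of the edges and then bound the weighted in-degree of each vertex by one of the two terms inside the maximum. First I would normalize the flow so that each arc $(s,n_e)$ carries exactly its lower bound, i.e.\ $1$ for a small edge and $p$ for a big edge. This is without loss of generality: decomposing the flow into $s$--$t$ paths and cycles, and using that every arc other than the $(s,n_e)$ arcs has lower bound $0$, one may delete path-flow passing through any over-saturated $n_e$ in integral units until its inflow drops to the lower bound, preserving feasibility and integrality. After normalization, flow conservation at $n_e$ forces a clean structure: for a small edge exactly one of its (at most two) outgoing unit-capacity arcs carries $1$, so the edge is oriented toward the endpoint receiving that unit; for a big edge its $p$ units split integrally as $f_{e\to i}$ into $v_{i,b}$ and $f_{e\to j}$ into $v_{j,b}$ with $f_{e\to i}+f_{e\to j}=p$ (a self-loop sends all $p$ units to its single neighbor).

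Next I would orient the big edges. The governing constraint is that the arc $v_{i,b}\to m_i$ has capacity $p$, so the total big-edge flow into any $v_{i,b}$ is at most $p$. Call $e=(i,j)$ \emph{strict} toward $i$ if $f_{e\to i}>p/2$, and \emph{tied} if $f_{e\to i}=f_{e\to j}=p/2$ (possible only for even $p$). Since two edges strict toward the same vertex would push more than $p$ units into its $v_{i,b}$, each vertex is the strict target of at most one big edge; orient every strict edge toward its strict endpoint. For the tied edges, the same capacity bound shows that at most two tied edges meet any vertex (each contributing $p/2$) and that no tied edge is incident to a strict target, so the tied edges form vertex-disjoint paths and cycles avoiding all strict targets, which can be oriented with in-degree at most $1$. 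Combining the two, every vertex receives at most one big edge, and whenever a big edge is oriented toward $i$ it delivers $f_{e\to i}\ge \ceil{p/2}=\floor{\frac{p+1}{2}}$ units into $v_{i,b}$, hence along $v_{i,b}\to m_i$.

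Finally I would bound the makespan vertex by vertex. Let $s_i$ denote the number of small edges oriented toward $i$; this equals the small-edge flow into $m_i$, so $s_i + (\text{flow on } v_{i,b}\to m_i)\le q$ by the capacity of $m_i\to t$. If no big edge is oriented toward $i$, its weighted in-degree is $c\,s_i\le cq$. If some big edge is oriented toward $i$, then by the previous paragraph the flow on $v_{i,b}\to m_i$ is at least $\floor{\frac{p+1}{2}}$, giving $s_i\le q-\floor{\frac{p+1}{2}}$ and weighted in-degree $1+c\,s_i\le 1+c\bigl(q-\floor{\frac{p+1}{2}}\bigr)$. Taking the larger of the two cases yields the claimed bound.

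The main obstacle is precisely the big-edge orientation: I must simultaneously guarantee that each vertex receives at most one big edge and that this edge contributes at least $\floor{\frac{p+1}{2}}$ units of flow. For odd $p$ this is immediate, since a strict edge already exceeds half of $p$ and no vertex can host two such edges. The delicate case is even $p$, where two edges may tie at exactly $p/2$; here the bound $2\cdot\frac{p}{2}=p$ on the capacity of $v_{i,b}\to m_i$ is what limits the tied subgraph to maximum degree $2$ and makes it orientable with in-degree $1$. Verifying this degree bound and the disjointness of tied edges from strict targets from the capacity constraint is the crux of the argument.
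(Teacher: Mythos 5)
Your proposal is correct and follows essentially the same route as the paper: orient each big edge toward an endpoint receiving more than $\floor{p/2}$ flow, use the capacity $p$ of the $(v_{i,b},m_i)$ arc to show each vertex hosts at most one such edge and receives at least $\floor{\frac{p+1}{2}}$ big-flow units, and resolve the even-$p$ ties (exact $p/2$--$p/2$ splits) via a degree-at-most-$2$ structure --- your paths-and-cycles orientation of the tied subgraph is the paper's Hall-condition matching between unoriented big edges and their out-neighbors in different clothing. The explicit normalization of the source arcs down to their flow lower bounds is a clean touch that the paper's proof leaves implicit.
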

\begin{proof}
	For any small edge $e$, we know that $n_e$ receives one unit of flow from $s$, and using the integrality of flows, $n_e$ sends one unit of flow to exactly one machine $m_i$. We orient $e$ towards vertex $i$.
    For any large edge $e$, if the flow on any of the out-going arcs, say $(n_e,v_{i,b})$,  is {\em strictly} greater than $\floor{p/2}$, then orient $e$ towards $i$.  This defines a partial orientation of all the edges of the graph balancing instance. \\
    
   \noindent
    {\bf Case 1: $p$ is odd.} Let $p=2\ell + 1$. Since a node $n_e$ corresponding to a big edge $e$ receives a flow of $2\ell + 1 $ units from $s$, and there are at most two outgoing arcs from $n_e$, 
    by integrality of flows and pigeonhole principle, it must be the case that one of the two arcs  carries at least $\ell+1 > \floor{p/2}$ flow. In particular, all big edges are oriented in the partial orientation implying it is in fact a complete one.
   
   Let us now analyze the total weighted in-degree of any vertex $i$.  Firstly note that if a vertex $i$ only has small edges oriented towards it, then its weighted in-degree is $\leq c q$. 
   Secondly, any vertex $i$ can have at most one big-edge oriented towards it. To see this, note that if $i$ got two big edges $e$ and $e'$ oriented towards it, then the flow on $(n_e,v_{i,b})$ and $(n_{e'},v_{i,b})$ are both $\geq \ell + 1$. All this flow of $\geq 2\ell + 2$ units must flow through the $(v_{i,b},m_i)$ arc, whose upper bound is $p = 2\ell+1$ contradicting the feasibility of the flow. So each vertex $i$ has at most one big-edge oriented towards it.  Furthermore, if this is the case, then the flow on the $(v_{i,b},n_e)$ arc is $\geq \ell + 1$ implying the total flow $i$ obtains from arcs of the form $(n_{e'},m_i)$ for small edges $e'$ is at most $q - (\ell + 1)=q - (\frac{p+1}{2})$. Thus the total weighted indegree of $i$ is at most $1 + c\cdot (q - \frac{p+1}{2})$.\\ 
   	
\noindent
	{\bf Case 2: $p$ is even.} Let $p=2\ell$. In this case, the current orientation may indeed be partial.
	Let $F$ be the collection of nodes $n_e$ which correspond to unoriented big edges, and $U$ be the out-neighbors of $F$ in the flow network. 
	Let $e$ be one such edge which hasn't been oriented yet. It must be the case then that the flow of $2\ell$ units that $n_e$ receives from $s$ must flow out 
	on {\em exactly} two arcs $(n_e,v_{i,b})$ and $(n_e,v_{j,b})$, each carrying flow exactly $\ell$ (in particular, $e$ cannot be a loop).
	Furthermore, if $v_{i,b}\in U$ receives $\ell$ units of flow from some $n_e \in F$, then since the capacity of $(v_{i,b},m_i)$ arc is $2\ell$ we have (a) it receives flow from at most one other vertex in $F$,
	and (b)	it cannot receive $\ell+1$ units of flow from any other $n_{e'}$; in particular, the vertex $i$ doesn't have any big edge oriented towards it in the partial orientation.
	
	Now in the induced directed graph between $F$ and $U$, the above discussion implies that the out-degree of every vertex in $F$ is exactly $2$, and the in-degree of every vertex in $U$ is at most $2$.
	This implies there is a matching between $F$ and $U$ which completely matches $F$; one can easily check Hall's condition. If $n_e$ is matched to $v_{i,b}$, then we orient $e$ towards $i$, thereby extending the partial orientation to a complete one. Note that any node $i$ has at most one edge corresponding to $F$ oriented towards it.
	
	Let us now analyze the weighted in-degree of a vertex $i$. Once again, as in Case 1, if $i$ has only small edges oriented towards it, then its weighted indegree is at most $c q$.
	Also as in Case 1, any vertex $i$ can have at most one big edge oriented towards it in the partial orientation. 
	Furthermore, by point (b) above, if $i$ receives a big edge in the partial orientation, it doesn't receive any other big edges in the extension.
    On the other hand if $i$ doesn't receive a big edge in the partial orientation, then by the matching property of the extension, it receives at most one big edge in the extension.
    Finally, note that if $i$ does receive exactly one big edge, then the total flow on the $(v_{i,b},m_i)$ arc must be $\geq \ell$. So, the total flow on arcs $(n_e,m_i)$ for small edges $e$ is at most $q - \ell = q-p/2$. Therefore, the weighted in-degree of a vertex $i$ which receives at lease one big edge is at most $1 + c(q-p/2)$.
\end{proof}

\begin{lemma}\label{lem:41}
	Given $\OPT = 1+\kk c < 2$, we can obtain a schedule with makespan $\leq \frac{3}{2} \OPT$.
\end{lemma}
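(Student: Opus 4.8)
The plan is to chain together \cref{lem:21}, the integrality remark, and \cref{lem:3}, and then finish with two short inequalities. First I would invoke \cref{lem:21}: since $\OPT = 1 + \kk c < 2$, there is a feasible fractional flow in $N(k, \kk + k)$, and the integrality remark immediately upgrades this to a feasible \emph{integral} flow in the same network. I would then feed this flow into \cref{lem:3} with $p = k$ and $q = \kk + k$, which produces a schedule whose makespan is at most
\[
\max\left\{ c(\kk + k),\ 1 + c\left(\kk + k - \floor{\tfrac{k+1}{2}}\right)\right\}.
\]
It remains only to check that each of the two terms is at most $\frac{3}{2}\OPT = \frac{3}{2}(1 + \kk c)$, using the single structural fact $k = \floor{1/c} \le 1/c$, i.e. $ck \le 1$.

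For the first term I would write $c(\kk + k) = c\kk + ck \le c\kk + 1 = \OPT \le \frac{3}{2}\OPT$, using $ck \le 1$ and that $\OPT \ge 0$. For the second term the key step is the elementary identity $k - \floor{(k+1)/2} = \floor{k/2}$, which holds whether $k$ is even or odd and can be verified by a quick two-case split. Substituting turns the second term into $1 + c(\kk + \floor{k/2})$, and comparing with $\frac{3}{2}(1 + \kk c)$ reduces the desired bound to $c\floor{k/2} \le \frac{1}{2}(1 + c\kk)$. This follows because $c\floor{k/2} \le \frac{ck}{2} \le \frac{1}{2}$ (again from $ck \le 1$), while $\frac{1}{2}(1 + c\kk) \ge \frac{1}{2}$ since $\kk \ge 0$.

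I do not expect a genuine obstacle here: all the work has already been done in \cref{lem:21} and \cref{lem:3}, and what is left is arithmetic. The only place requiring a moment's care is the bookkeeping around $\floor{(k+1)/2}$ — specifically recognizing that $k - \floor{(k+1)/2} = \floor{k/2}$ — after which the lone inequality $ck \le 1$, inherited from the definition $k = \floor{1/c}$, drives both estimates and closes the argument.
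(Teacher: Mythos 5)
Your proposal is correct and follows essentially the same route as the paper: invoke Lemma~\ref{lem:21} and flow integrality, apply Lemma~\ref{lem:3} with $p=k$, $q=\kk+k$, use the identity $k-\floor{\frac{k+1}{2}}=\floor{\frac{k}{2}}$, and close with $ck\le 1$. Your version is even marginally more careful than the paper's, which asserts $kc<1$ where only $kc\le 1$ holds when $1/c$ is an integer (your $ck\le 1$ suffices throughout).
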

\begin{proof}
By Lemma~\ref{lem:21} and Lemma~\ref{lem:3}, there is a feasible flow in the network $N(k,\kk+k)$ and we can obtain a schedule with makespan $\leq \max \{c \cdot (\kk+k), 1 + c \cdot(\kk+k-\floor{\frac{k+1}{2}})\}$.
Now, since $k = \floor{1/c}$, we have $kc < 1$ implying $c(\kk + k) < 1 + \kk c = \OPT$.
Also, we get $k - \floor{\frac{k+1}{2}} =  \floor{\frac{k}{2}}$, and so we get that the $\max \{c \cdot (\kk+k), 1 + c \cdot(\kk+k-\floor{\frac{k+1}{2}})\} \leq \OPT + c\cdot\floor{k/2} \leq \OPT  + 1/2 < 3\OPT/2$.
\end{proof}

\begin{lemma}\label{lem:42}
	Given $\OPT = \kkkk c < 2$ and is not equal to $1+\kk c$ for any non negative integer $\kk$, we can obtain a schedule with makespan $\leq \frac{3}{2} \OPT$.
\end{lemma}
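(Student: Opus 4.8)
The plan is to mirror the proof of \cref{lem:41}, instantiating \cref{lem:22} and \cref{lem:3} with $p=k+1$ and $q=\kkkk$. By \cref{lem:22} there is a feasible flow in $N(k+1,\kkkk)$, and by \cref{lem:3} this yields a schedule of makespan at most $\max\{c\kkkk,\ 1+c(\kkkk-\floor{\frac{k+2}{2}})\}$, since here $\floor{\frac{p+1}{2}}=\floor{\frac{k+2}{2}}=\floor{k/2}+1$. The first term equals $\OPT$, so it is trivially at most $\frac{3}{2}\OPT$; the entire difficulty lies in the second term, which I would rewrite as $\OPT+\big(1-c-c\floor{k/2}\big)$. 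Thus it suffices to show the ``excess'' $1-c-c\floor{k/2}$ is at most $\frac12\OPT$. Throughout I would use that $\kkkk c\neq 1+\kk c$ forces $1/c\notin\Z$ (as established in \cref{lem:22}), so that $k=\floor{1/c}$ satisfies $kc<1<(k+1)c$.

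The key difference from \cref{lem:41}, where $\OPT=1+\kk c\ge 1$ held automatically, is that here $\OPT=\kkkk c$ may be smaller than $1$, and I would split on this. First suppose $\OPT>1$ (the value $\OPT=1=1+0\cdot c$ is excluded by hypothesis, so $\OPT\neq 1$). Then $\kkkk c>1$, and since $\kkkk$ is an integer exceeding $1/c\in(k,k+1)$ we get $\kkkk\ge k+1$. Using $\floor{k/2}\ge\frac{k-1}{2}$ together with $1<(k+1)c$,
\[
1-c-c\floor{k/2}\ \le\ 1-c-c\tfrac{k-1}{2}\ <\ (k+1)c-c-c\tfrac{k-1}{2}\ =\ \tfrac{k+1}{2}c\ \le\ \tfrac{\kkkk}{2}c\ =\ \tfrac12\OPT,
\]
so the second term is strictly below $\frac32\OPT$ and the claimed bound holds.

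It remains to dispose of the case $\OPT<1$, which is where the bound of \cref{lem:3} is genuinely loose and a separate structural observation is needed. Here I would argue there are simply no big edges in the instance: a big edge has weight exactly $1$ and, oriented towards either endpoint, would force that vertex's weighted in-degree to be at least $1>\OPT$, a contradiction. Consequently no vertex ever receives a big edge in the orientation produced by \cref{lem:3}, the ``$1+\cdots$'' term never binds, and every vertex has weighted in-degree at most $cq=c\kkkk=\OPT\le\frac32\OPT$. I expect this $\OPT<1$ regime to be the main obstacle: for odd $k$ the estimate above can push the excess slightly past $\frac12$, so one cannot simply bound it by a constant as in \cref{lem:41}, and must instead invoke either $\kkkk\ge k+1$ (when $\OPT>1$) or the absence of big edges (when $\OPT<1$).
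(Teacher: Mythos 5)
Your proof is correct, and in the main case it follows the paper's route exactly: a feasible flow in $N(k+1,\kkkk)$ via Lemma~\ref{lem:22}, the bound $\max\{c\kkkk,\, 1+c(\kkkk-\floor{\frac{k+2}{2}})\}$ via Lemma~\ref{lem:3}, the identity $\floor{\frac{k+2}{2}}=\floor{k/2}+1=\ceil{\frac{k+1}{2}}$, and the inequality $(k+1)c>1$ coming from $1/c\notin\Z$. Where you diverge from the paper --- and in fact improve on it --- is the endgame. The paper's chain concludes with ``$\leq \OPT + 1/2 \leq \frac{3}{2}\OPT$'', which silently assumes $\OPT\geq 1$; the hypotheses of the lemma do not guarantee this (a single edge of weight $c$ gives $\OPT=\kkkk c$ with $\kkkk=1$, which is not of the form $1+\kk c$, and then for any $c<2/5$ the second term of the max is $1-c>\frac{3}{2}c=\frac{3}{2}\OPT$, so the paper's written inequality genuinely fails in this regime even though Lemma~\ref{lem:3}'s statement itself is not violated --- it is merely too weak). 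Your two-case repair is sound: when $\OPT>1$ you derive $\kkkk\geq k+1$ from $\kkkk c>1$ and $1/c\in(k,k+1)$, and your chain $1-c-c\floor{k/2}\leq 1-c-c\frac{k-1}{2}<(k+1)c-c-c\frac{k-1}{2}=\frac{k+1}{2}c\leq\frac{\kkkk}{2}c=\frac{1}{2}\OPT$ checks out, giving a sharper form of the paper's ``$>1/2$'' computation that never needs $\OPT\geq 1$; when $\OPT<1$ you correctly note that no big edge can exist (orienting a weight-$1$ edge either way forces some in-degree $\geq 1>\OPT$), so every vertex receives only small edges and the orientation of Lemma~\ref{lem:3} has makespan at most $cq=c\kkkk=\OPT$. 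One small caveat worth being aware of: this last step uses the internals of Lemma~\ref{lem:3}'s proof (the $cq$ bound for vertices with only small edges) rather than its stated $\max$, but as you frame it this is legitimate. In short: same approach as the paper, plus an explicit and necessary patch for the $\OPT<1$ corner case that the paper's one-line proof glosses over.
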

\begin{proof}
By Lemma~\ref{lem:22} and Lemma~\ref{lem:3}, there is a feasible flow in the network $N(k+1,\kkkk)$ and we can obtain a schedule with makespan $\leq \max \{c \cdot \kkkk, 1 + c \cdot(\kkkk-\floor{\frac{k+2}{2}})\} = \OPT + 1 - c \cdot (\floor{\frac{k}{2}}+1) =\OPT + 1 - c \cdot \ceil{\frac{k+1}{2}}  \leq \OPT + 1/2 \leq \frac{3}{2}\OPT$ (Since $c \cdot \ceil{\frac{k+1}{2}} \geq c \cdot \frac{k+1}{2} > 1/2$)
\end{proof}
The above two lemmas along with Lemma~\ref{lem:1} implies the following theorem.
\begin{theorem}\label{thm:main}
	There is a $3/2$-approximation for graph balancing with two types of edge weights.
\end{theorem}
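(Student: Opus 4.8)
The plan is to obtain the theorem by combining the three makespan guarantees already in hand --- \cref{lem:1}, \cref{lem:41}, and \cref{lem:42} --- through a case analysis on the value of $\OPT$, after first dealing with the fact that the algorithm does not know $\OPT$. The dichotomy recorded just before \cref{lem:1} says that, since the only edge weights are $1$ and $c$, the optimum (indeed any achievable makespan) is either $1 + \kk c$ for an integer $\kk$ or $\kkkk c$ for an integer $\kkkk$. So I would first argue that there are only polynomially many candidate values for $\OPT$: when $\OPT < 2$ the bottleneck vertex carries at most one big edge (two big edges would already force in-degree $\ge 2 > \OPT$) and some number $b$ of small edges with $0 \le b \le m$, where $m = |E|$, so $\OPT \in \{\, a + b c : a \in \{0,1\},\ 0 \le b \le m \,\}$, a set of size $O(m)$. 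This lets the algorithm simply enumerate all candidate targets.

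For each candidate target $T$, I would run the appropriate subroutine and retain the schedule produced. If $T \ge 2$, apply the algorithm of \cite{LST} as in \cref{lem:1}, which yields makespan at most $\frac32 T$. If $T < 2$ and $T = 1 + \kk c$, build the network $N(k, \kk + k)$, test for a feasible flow by a standard max-flow computation with the prescribed lower bounds, and if feasible round the resulting integral flow via \cref{lem:3} exactly as in \cref{lem:41} to get makespan at most $\frac32 T$. If instead $T < 2$ is of the form $\kkkk c$ but cannot also be written as $1 + \kk c$, build $N(k+1, \kkkk)$ and proceed as in \cref{lem:42}. The overlap case, where $T$ admits both representations, is subsumed by the first branch, since \cref{lem:42} explicitly excludes it while \cref{lem:41} covers it. The algorithm outputs the minimum-makespan schedule found over all candidate targets.

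Correctness then follows by focusing on the correct guess $T = \OPT$. In that case exactly one of \cref{lem:21} or \cref{lem:22} certifies that the corresponding network is feasible, so the matching one of \cref{lem:1}, \cref{lem:41}, \cref{lem:42} genuinely produces a schedule of makespan at most $\frac32\OPT$. Because the algorithm reports the best schedule over all guesses, its output is no worse, which is the claimed guarantee.

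I expect the only real subtlety to lie in this guessing reduction --- verifying that polynomially many target values suffice and that the integral-flow rounding is invoked with the right network in each regime --- rather than in any fresh combinatorial argument. All of the substantive work (feasibility of the flows in \cref{lem:21} and \cref{lem:22}, and the rounding bound of \cref{lem:3} specialized in \cref{lem:41} and \cref{lem:42}) is already complete; the content of the theorem is merely that the regimes $\OPT \ge 2$, $\OPT = 1 + \kk c$, and $\OPT = \kkkk c$ are exhaustive and that each admits a $\frac32$-approximation.
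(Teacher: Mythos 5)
Your proposal is correct and follows essentially the same route as the paper: the theorem is obtained exactly by combining \cref{lem:1} with \cref{lem:41} and \cref{lem:42} after noting the three regimes are exhaustive, and your enumeration of the $O(|E|)$ candidate targets $a+bc$ is a harmless substitute for the binary-search implementation the paper sketches in its (commented-out) proof. One small patch is needed: run the algorithm of \cite{LST} unconditionally rather than only when some enumerated candidate $T\ge 2$, since when $\OPT\ge 2$ it can happen that every candidate $a+bc$ is below $2$ (e.g.\ several big self-loops at one vertex with $|E|\cdot c<1$) and every network is infeasible, so your loop as written outputs no schedule at all --- indeed, joint infeasibility of the networks is precisely the certificate, via the contrapositives of \cref{lem:21} and \cref{lem:22}, that $\OPT\ge 2$ and hence that \cref{lem:1} applies.
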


\noindent
{\bf Remark.} The above theorem can also show that the integrality gap of the so-called configuration LP is $\leq 3/2$ too. To see this, note it is easy to prove that if the conf LP is feasible for some guess $T = 1+\kkk c$ or $k''c$ of $\OPT$, 
then there is a feasible flow in the network $N(k,\kk + k)$ or $N(k+1,\kkkk)$, respectively.\smallskip

\noindent
{\bf Remark.}  (added 9th June, 2016.) After posting the version 1 of our paper, it was brought to our notice that our result is not new. Huang and Ott~\cite{HO15}, and independently, Page and Solis-Oba~\cite{PS16} also obtain the same result; the former paper in fact gives a 3/2-factor algorithm even in the case when the smaller jobs can go to any number of machines. It is easy to see that our algorithm easily modifies to this case (we never use the fact that small jobs go to at most two machines). The algorithm and analysis in this note is arguably simpler.

\bibliographystyle{alpha}
\bibliography{GBpapers}

\begin{thebibliography}{PSO16}

\bibitem[EKS08]{EKS}
T.~Ebenlendr, M.~Kr{\v{c}}{\'a}l, and J.~Sgall.
\newblock Graph balancing: a special case of scheduling unrelated parallel
  machines.
\newblock 2008.

\bibitem[HO16]{HO15}
C.~Huang and S.~Ott.
\newblock A combinatorial approximation algorithm for graph balancing with
  light hyper edges.
\newblock {\em Proceedings of the European Symposium on Algorithms (ESA)},
  2016.

\bibitem[KM13]{KM}
S.~G. Kolliopoulos and Y.~Moysoglou.
\newblock The 2-valued case of makespan minimization with assignment
  constraints.
\newblock {\em Information Processing Letters}, 113(1-2):39--43, January 2013.

\bibitem[LST90]{LST}
J.~K. Lenstra, D.~B. Shmoys, and E.~Tardos.
\newblock Approximation algorithms for scheduling unrelated parallel machines.
\newblock {\em Math. Programming}, 46(1):259--271, 1990.

\bibitem[PSO16]{PS16}
D.~R. Page and R.~Solis-Oba.
\newblock A 3/2-approximation algorithm for the graph balancing problem with
  two weights.
\newblock {\em Algorithms}, 9(2):38, 2016.

\end{thebibliography}
\end{document}